\newcommand*\mso{\textsc{mso}\xspace}
\newcommand{\matrclos}{\mathsf{M}}
\newcommand{\wreaclos}{\mathsf{W}}
\newcommand{\mypic}[1]{
	\begin{center}
		\includegraphics[page=#1,scale=0.4]{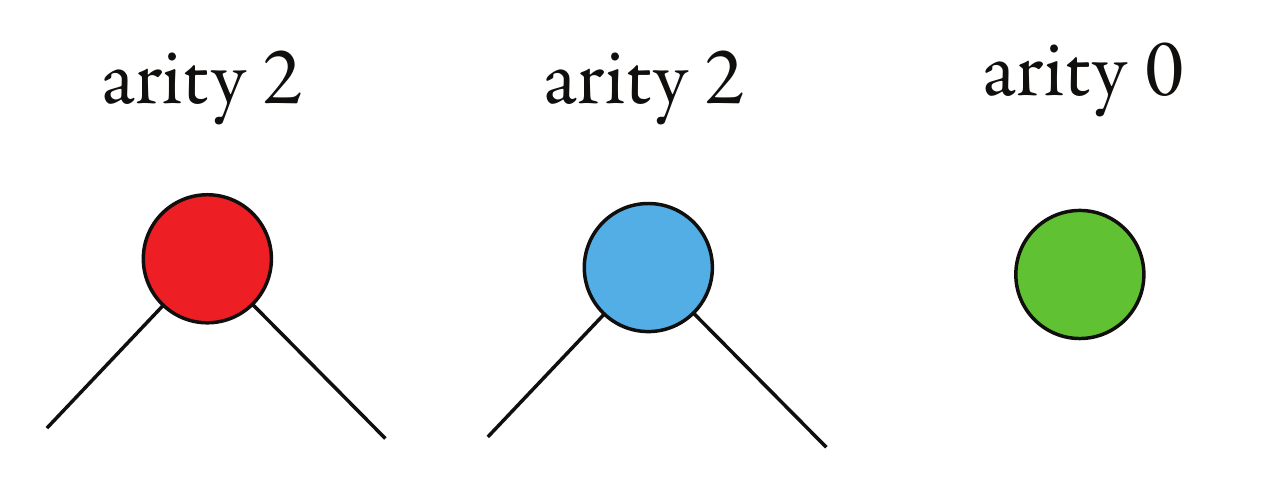}
	\end{center}
}
\newcommand{\langclass}{\mathscr L}
\newcommand{\algclass}{\mathscr A}
\newcommand{\pol}[1]{\mathsf{pol}#1}
\newcommand{\dtop}{{\sc dtop}\xspace}
\newcommand{\dtops}{{\sc dtop}s\xspace}
\newcommand{\trees}[1]{\mathsf{trees}#1}
\newcommand{\alg}{\mathbf{A}}
\newcommand{\balg}{\mathbf{B}}
\newcommand{\eqdef}{\stackrel{\text{def}}=}
\newcounter{ourexamplecounter}
\title{Some connections between universal algebra and logics for trees}
\author{Miko{\l}aj Boja\'nczyk and Henryk Michalewski (University of Warsaw)}
\begin{document}
\maketitle
\begin{abstract}
One of the major open problems in automata and logic is the following: is there an algorithm which inputs a regular tree language and decides if the language can be defined in first-order logic? The goal of this paper is to present this problem and similar ones using the language of universal algebra, highlighting potential connections to the structural theory of finite algebras, including Tame Congruence Theory.
\end{abstract}
\section{Introduction}
This paper is dedicated to the memory of Zolt\'an \'Esik, in recognition of  his many contributions to the algebraic theory of languages. Our topic is the following problem, and similar ones:
\begin{itemize}
	\item 	Is there an algorithm which inputs a regular tree language and decides if the language can be defined in first-order logic?
\end{itemize}
 For regular words languages, the answer is positive, as shown by Sch\"utzenberger~\cite{DBLP:journals/iandc/Schutzenberger65a} together with McNaughton-Papert~\cite{mcnaughton}.  Furthermore, the solution in the case of words uses algebra: a regular word language is definable in first-order logic if and only if its syntactic semigroup does not contain a group. This result has been an inspiration for a field called \emph{algebraic language theory}, see~\cite{pinbook}. There is a history of failed attempts to generalise the Sch\"utzenberger-McNaughton-Papert result to trees, see~\cite{bojanczyk-tree-algs} for a discussion.  Remarkably, the attempts to characterise first-order logic (and related logics) for trees have not used the structural theory of finite algebras, e.g.~Tame Congruence Theory~\cite{hobby1988structure}, a theory which has gained importance in theoretical computer science due to its application to  classifying Constraint Satisfaction Problems~\cite{DBLP:conf/dagstuhl/BartoKW17}.
The goal of this paper is to present the questions about tree logics using the language of universal algebra. We hope that this would make it easier for (a) specialists in universal algebra to attack the formal language problems; (b) specialists in formal languages to start using the tools of universal algebra.

This paper is organised as follows.
\begin{itemize}
	\item In Section~\ref{sec:languages}, we describe trees and how logics can define sets of trees. We also describe the main topic of this paper,  the   \emph{definability problem}, which is the following decision problem parametrised by a logic $\langclass$:  decide if a given regular tree language can be defined by some formula of the logic $\langclass$.
	\item In Section~\ref{sec:definability-algebraic}, we show how for many logics of interest, the definability problem can be recast as a question about finite algebras.
	\item In Section~\ref{sec:clones} we give a very brief description of the structural theory of finite algebras, and discuss some preliminary ideas on how it might be used to solve the definability problem.
	\item In Section~\ref{sec:matrix} we draw the connection between the matrix power of finite algebras and reductions via deterministic top-down tree transducers.
	\item In Section~\ref{sec:wreath} we draw the connection between the wreath products of finite algebras and nesting of tree languages.
\end{itemize}

\section{Tree languages and logics defining them}
\label{sec:languages}

\newcommand{\aalg}{\mathbf A}
\newcommand{\clonec}{\mathbf C}
\newcommand{\cloned}{\mathbf D}
There are many variants of trees studied in the formal language literature, including finite and infinite trees, with ranked or unranked alphabets. For an overview of algebraic approaches to these trees, see~\cite{bojanczyk-tree-algs}.  In this paper we talk about finite trees over a ranked alphabet, which is the formalism most closely connected to universal algebra.   Define a \emph{ranked alphabet} to be a finite set $\Sigma$, with each element $a \in \Sigma$ associated an arity in $\set{0,1,\ldots,}$.   Define \emph{tree over $\Sigma$} to be a finite tree where each node is labelled by a label from $\Sigma$ such that the arity of the label is equal to the number of children. We assume that the children are ordered, i.e.~it make sense to talk about the first child, second child, etc. We write $\trees \Sigma$ for the set of trees over $\Sigma$. Here is a picture of a ranked alphabet:
\mypic{1} 
Here is a picture of a tree over the above ranked alphabet, together with the standard tree terminology that we use in this paper:
\mypic{2}
\subsection{Algebras.} Define an algebra $\aalg$ to be a set $A$, called the \emph{carrier} of the algebra, together with a set of operations of type $f : A^n \to A$, with possibly different arities $n$. An algebra is called \emph{finite} if its carrier is finite and its set of operations is also finite. We adopt the convention that algebras are written in boldface, e.g.~$\aalg$ or $\balg$, and their respective carriers are denoted using the same letter but not in boldface, e.g.~$A$ or $B$.

\begin{exa}
The Boolean algebra, which we denote by $(2, \lor, \land, \neg)$, has  carrier $\set{0,1}$ and  two binary operations $\set{0,1}^2 \to \set{0,1}$ standing for disjunction and conjunction, as well as one unary operation $\set{0,1} \to \set{0,1}$ standing for negation. If we remove $\neg$ from the set of operations, then the algebra is called the (Boolean) lattice, and if we keep only $\lor$ in the operations then it is called the (Boolean) semi-lattice.
\end{exa}

\begin{exa}\label{ex:trees-as-free-algebra}
	If $\Sigma$ is a ranked alphabet, then $\trees \Sigma$ can be viewed as an algebra, where the carrier is all trees, and there is an operation for every letter $a \in \Sigma$ which combines trees in the obvious way.
\end{exa}

Note that in the definition of algebra above, there are no names for the operations. An alternative would be to consider $\Sigma$-algebras, where $\Sigma$ is some ranked alphabet; in a $\Sigma$-algebra the operations are indexed by letters from $\Sigma$ with corresponding arities. Such algebras are sometimes called \emph{indexed algebras}. Indexed algebras are the more common formalism in the formal language community, see e.g.~\cite{thatcher} which introduces regular tree languages, or the survey books~\cite{tata2007,DBLP:books/others/tree1984}. We use non-indexed algebra here,  to be more consistent with the literature on finite algebras, where non-indexed algebras are more prevalent, e.g.~\cite{hobby1988structure}.

\subsection{Tree languages.}
%
%
%

 A \emph{tree language} over a ranked alphabet $\Sigma$  is defined to be any subset $L \subseteq \trees \Sigma$.  We use algebras to recognise tree languages in the following way.  Define a function from  $\trees \Sigma$ to the carrier   of an algebra $\aalg$ to be a \emph{homomorphism} if  for every $a \in \Sigma$ of arity $n$  there is an $n$-ary operation $f : A^n \to A$ in the algebra $\aalg$ such that
\begin{align*}
  h(a(t_1,\ldots,t_n)) = f (h(t_1),\ldots,h(t_n)) \qquad \mbox{for all }t_1,\ldots,t_n \in \trees \Sigma.
\end{align*}
In other words, one can index the operations in the (unindexed) algebra $\aalg$ so that it becomes a $\Sigma$-algebra and then $h$ becomes a homomorphism in the usual sense of algebras over a signature $\Sigma$, with $\trees$ seen as a $\Sigma$-algebra in the sense of Example~\ref{ex:trees-as-free-algebra}.
We say that a tree language $L \subseteq \trees \Sigma$ is \emph{recognised} by a homomorphism $h$ as above if membership $t \in L$ depends only on the value $h(a)$, i.e.~one can distinguish an accepting subset $F \subseteq A$ such that $L$ is equal to the inverse image $h^{-1}(F)$. A tree language is said to be \emph{recognised} by an algebra if it is recognised by some homomorphism into it. A tree language is called \emph{regular} if it is recognised by a homomorphism into some finite algebra. 

A homomorphism can be viewed as a deterministic bottom-up tree automaton, with the states being the universe of the algebra $\aalg$ and the transitions being defined according to the homomorphism. The only difference between such a homomorphism $h$ and a (deterministic bottom-up tree) automaton is that an automaton comes with a set of accepting states.

\subsection{Logic on trees} Regular tree languages are an important topic in formal language theory. There are many variants  (e.g.~unranked trees that appear in the study of {\sc xml} or infinite trees as studied in the theory of verification), but already there is much to say about   finite trees over a ranked alphabet, as discussed in this paper. Our main topic of interest is tree languages that can be defined using  logic, mainly monadic second-order logic and its fragments. The paradigm  dates back to results of B\"uchi, Trakhtenbrot and Elgot in the early 1960's: we view a tree (or word) as a  relational structure, and then associate to each formula of logic  those trees where the formula is true. For more on this paradigm, see~\cite{thomas1997languages}. 

 A tree $t \in \trees \Sigma$ is  interpreted as a relational  structure, in the sense of model theory,  as follows. The universe is the set of nodes. The vocabulary contains a unary predicate   $a(x)$ for every $a \in \Sigma$, which is interpreted as the nodes with label $a$, as well as the following   binary predicates: a descendant predicate, and an $i$-th child predicate for every $i \in \set{1,2,\ldots}$.  Call this structure $\underline t$. To describe properties of $t$, in terms of the structure $\underline t$, we  use the following logics, listred in decreasing order of expressive power:
\begin{itemize}
	\item \emph{Monadic second-order logic}, which quantifies over nodes and sets of nodes.
	\item \emph{Chain logic}~\cite{DBLP:conf/caap/Thomas84}, which quantifies over nodes and chains, where chains are sets of nodes that are totally ordered by the descendant relation\footnote{A natural alternative would be to consider antichain logic, where set quantification is restricted to sets that are antichains with respect to the descendant relation. In~\cite{DBLP:conf/fct/PotthoffT93} it is shown that, in the absence of letters of arity one, antichain logic has the same expressive power as full monadic second-order logic.}.
	\item \emph{First-order logic}, which quantifies over nodes.
\end{itemize}
Note that \mso and chain logic have the same syntax, but the semantics are different because of the way the second-order variables are interpreted: in \mso they range over arbitrary sets of nodes, and in chain logic they range only over chains.
A tree language is called \emph{definable} in one of the logics above if there is a formula of the  logic, over the vocabulary described above, such that a tree $t$ belongs to the language if and only if the formula is true in the structure $\underline t$. As shown by Thatcher and Wright already in the first paper on regular tree languages~\cite{thatcher}, a tree language is regular if and only if it is definable in monadic second-order logic. As already mentioned, the three logics discussed above have different expressive powers, and the strictness of the inclusions is witnessed by examples below as follows:
\begin{center}
	first-order logic $\stackrel {\text{Example~\ref{ex:unbounded-alternation}}}\subsetneq $ chain logic $\stackrel {\text{Example~\ref{ex:parity}}}\subsetneq $ \mso $\stackrel {\text{\cite{thatcher}}} = $ all regular tree languages.
\end{center}

\begin{exa}[\bf Boolean formulas with conjunction only]
Suppose that the alphabet is $\set{\lor,0,1}$ with $\lor$ having arity two, and $\set{0,1}$ having arity zero. A tree over this alphabet is a Boolean formula that only uses disjunction. For such a tree, we can talk about its value in $\set{0,1}$, which is obtained by simply evaluating the formula. The tree language consisting of Boolean formulas which are true is defined by the following formula of first-order logic
\begin{align*}
  \exists x \ 1(x)
\end{align*}
which says that some node has label 1. 	This node is necessarily a leaf, since label 1 has arity zero.
\end{exa}

\begin{exa}[\bf Boolean formulas of bounded alternation]
Let us continue the example of Boolean formulas by adding a binary symbol $\land$ to the alphabet, i.e.~the alphabet is now $\set{\lor,\land, 0,1}$. We  say that a tree is in {\sc cnf} form if a node with disjunction does not have conjunctions in its subtree, i.e.~it satisfies the following first-order sentence, which uses $x \le y$ for the descendant relation:
\begin{align*}
  \forall x \ \forall y  \ {\color{red}\lor}(x) \ \land\  x \le y\    \Rightarrow \neg ({\color{red}\land} (y)).
\end{align*}
In the above formula, we used the red colour to distinguish the unary predicate ``node $x$ has label $\lor$'' from the logical connective $\lor$, likewise for $\land$. If a tree is in {\sc cnf} form, then its value as a Boolean formula is ``true'' if and only if it satisfies the following first-order formula
\begin{align*}
  \forall x\ \big(   {\color{red}\land} (y) \ \Rightarrow \ \exists y  ( x \le y \land 1(y))\big).
\end{align*}
Using similar ideas, one can define in first-order logic the set of true formulas in {\sc dnf} formula, or more generally, the set of true formulas of any fixed alternation between $\lor$ and $\land$. 
 \end{exa}
 
 \begin{exa}[\bf  Boolean formulas of unbounded alternation]\label{ex:unbounded-alternation}
 In the previous example, we discussed true Boolean formulas with bounded alternation of $\lor$ and $\land$. When the alternation is unbounded, first-order logic is no longer sufficient to define the set of true Boolean formulas~\cite{DBLP:phd/dnb/Potthoff94}, and even chain logic is not sufficient, see Lemma 2.5.12 in~\cite{doktorat-bojanczyk}. On the other hand, \mso is sufficient, by using a formula which guesses the set $X$ of nodes which have subtrees that evaluate to true:
 \begin{align*}
  \exists X \ \forall x \quad  x \in X \ \Leftrightarrow \ \land \begin{cases}
  	{\color{red}\land} (x) \Rightarrow (\forall y \ \mathrm{child}(x,y) \Rightarrow y \in X) \\
  	{\color{red}\lor} (x) \Rightarrow (\exists y \ \mathrm{child}(x,y) \land y \in X)\\
  	\neg (0 (x))
  \end{cases}
\end{align*}
(Technically speaking, the  child relation is the disjunction of the first child relation and the second child relation.)
The same idea works for any language recognised by a finite algebra (equivalently, tree automaton), except that instead of existentially guessing one set $X$, one might need to guess more sets to represent the carrier of the algebra.
 \end{exa}

\begin{exa}[\bf Parity is not first-order definable]\label{ex:parity} Suppose that the ranked alphabet is this: \mypic{11} Every  tree over this alphabet looks like this, for some choice of $n$: \mypic{12} Let $L$ be the set of trees over this alphabet where the number of  nodes is even. Like any regular language, this language is definable in \mso. The formula  uses existential set quantification to guess those tree nodes that have an even number of nodes in their subtree. If we take the same formula, and interpret it as a formula of chain logic, then it will also define the same language. This is because when the alphabet has only symbols of arity at most one, then all sets of nodes are necessarily chains.  Therefore, the language $L$ is definable in both \mso and chain logic. (Actually, chain logic is contained in \mso with respect to expressive power, since one can easily check in \mso if a set of nodes is a chain.) First-order logic is too weak to define $L$. This can be shown using the same Ehrenfeucht-Fra\"iss\'e argument which shows that ``words of even length'' cannot be defined in first-order logic, see e.g.~Theorem IV.2.1 in~\cite{straubing}.
\end{exa}

 \subsection{The definability problem.} For a fragment of monadic-second order logic on trees, e.g.~chain logic or first-order logic, the \emph{definability problem} is the decision problem: given a regular tree language, decide if the language can be defined by some formula of the lgoic. It makes little sense to talk about the definability problem of full monadic second-order logic, since this logic recognises all regular tree languages, and therefore the algorithm would always say ``yes''.
 One way of representing the input for the algorithm is by giving a  homomorphism
 \begin{align*}
  h : \trees \Sigma \to \aalg
\end{align*}
into a finite algebra which recognises it, together with the accepting set, i.e.~the image of the language under $h$. Another representation would be a formula of \mso defining the language. As long as  decidability but not computational complexity is concerned, the choice between the two representations above (or many other) is unimportant, because there are effective conversions both ways (although the conversion from \mso to an algebra is nonelementary, see e.g.~the remarks on p.~398 of~\cite{thomas1997languages}). One of the major open problems in formal language theory is the following question, first posed by Wolfgang Thomas in~\cite{DBLP:conf/caap/Thomas84}: is definability in first-order logic decidable? There exist several different characterisations of first-order logic for trees, e.g.~using algebra~\cite{doktorat-bojanczyk,DBLP:journals/corr/abs-1208-6172, esik-weil1} or using temporal logic~\cite{DBLP:conf/caap/Thomas84}, but none of these characterisations yield algorithms for the definability problem. The main challenge is that by first-order logic, we mean first-order logic with the descendant predicate, which breaks techniques using Hanf locality; in fact definability is decidable for first-order logic with the child relations only~\cite{ben-seg}. 
 A related question, which turns out to be closer to the focus of this paper is: is definability in chain logic decidable? The definability question for chain logic was studied in~\cite{doktorat-bojanczyk,DBLP:journals/corr/abs-1208-6172}, but only non-effective characterisations were presented there.
 
The goal of this paper is to shed some light on the definability problems described above by looking at related results from universal algebra. The conclusion is going to be that there is some hope to use the structural theory of finite algebras to decide  the definability problem for chain logic; but for first-order logic the road ahead seems to be longer.

\section{Definability as an algebraic question}
\label{sec:definability-algebraic}
As mentioned in the introduction, the connection between logic and algebra is well understood for word languages. In the case of word languages, the fundamental results are: (a) the Sch\"utzenberger Theorem, which says that a word language is definable in first-order logic with order if and only if its syntactic semigroup contains no group; and (b) the Eilenberg Pseudovariety Theorem, which shows that pseudovarieties of languages are in one-to-one correspondence with pseudovarieties of semigroups.  Generalising (a) to trees is a major open problem. On the other hand, (b) lends itself much more easily to generalisations, and this has been done for the first time in~\cite{steinby79}, and then several other times, because of different notions of algebra, see the discussion on p.~29 of~\cite{Gecseg1997} or Section 4 in~\cite{DBLP:journals/corr/Bojanczyk15}. Since pseudovariety theorems can be a bit longwinded, in this section we concentrate only on one aspect of such theorems, namely sufficient conditions for a class of languages to be characterisable in terms of the syntactic algebra.

\subsection{Syntactic Algebra}

Before defining the syntactic algebra, let us begin by recalling some standard algebraic terminology. Let $\aalg$ be an algebra.  A \emph{reduct} of $\aalg$ is any algebra obtained from it by keeping the same carrier, but removing some of the operations. A \emph{subalgebra} of $\aalg$ is any algebra obtained from it by restricting the carrier to some subset that is closed under all operations in the algebra. A \emph{congruence} in $\aalg$ is an equivalence relation on the carrier which is compatible with all the operations in the usual sense; given a congruence $\sim$ one can define a quotient algebra $\aalg/_\sim$ in the usual way. We say that an algebra $\aalg$ divides an algebra $\balg$ if $\aalg$ can be obtained from $\balg$ by: first taking a reduct, then a subalgebra, and then a quotient. The following theorem is folklore, see e.g.~Proposition 11.2 in~\cite{Gecseg1997}.

\begin{thm}[Myhill-Nerode for trees]
	For every regular language $L \subseteq \trees \Sigma$ there exists a finite algebra  which recognises $L$, and furthermore divides every other  algebra recognising~$L$.
\end{thm}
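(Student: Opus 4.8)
The plan is to construct the syntactic algebra of $L$ explicitly and then check the two required properties. First I would introduce \emph{contexts}: a context over $\Sigma$ is a tree with a single distinguished leaf, the \emph{hole}; given a context $C$ and a tree $t \in \trees \Sigma$, write $C[t]$ for the tree obtained by substituting $t$ for the hole. The key observation, used repeatedly, is that for a fixed $C$ the map $t \mapsto C[t]$ is a term operation of the algebra $\trees \Sigma$ of Example~\ref{ex:trees-as-free-algebra}; hence any homomorphism $h$ out of $\trees \Sigma$ satisfies that $h(C[t])$ depends only on $C$ and on $h(t)$.

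Next I would define the \emph{syntactic congruence} $\sim_L$ on $\trees \Sigma$ by: $t \sim_L t'$ if and only if $C[t] \in L \Leftrightarrow C[t'] \in L$ for every context $C$. It is clear that $\sim_L$ is an equivalence relation. That it is a congruence follows from a hybrid argument: to prove $a(t_1,\dots,t_n) \sim_L a(t_1',\dots,t_n')$ whenever $t_i \sim_L t_i'$ for all $i$, one changes the arguments one at a time, and a single change is handled by folding the surrounding occurrence of $a$ together with the ambient context into one new context. Let $\mathbf{Synt}(L) \eqdef \trees \Sigma /_{\sim_L}$ with quotient homomorphism $h_L$. Using the trivial context (the hole alone) shows that $L$ is a union of $\sim_L$-classes, so $h_L$ recognises $L$.

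Finiteness and divisibility both reduce to a single remark. Suppose $g : \trees \Sigma \to \balg$ recognises $L$, say $L = g^{-1}(F)$. If $g(t) = g(t')$ then $g(C[t]) = g(C[t'])$ for every $C$ by the observation above, hence $C[t] \in L \Leftrightarrow C[t'] \in L$; thus the kernel of $g$ refines $\sim_L$. In particular $\sim_L$ has at most $|g(\trees \Sigma)| \le |B|$ classes, so $\mathbf{Synt}(L)$ is finite whenever some finite $\balg$ recognises $L$ --- which is exactly regularity. For divisibility: pass to the reduct of $\balg$ keeping only the operations actually used by the homomorphism $g$, one per letter of $\Sigma$ (this step is forced by the unindexed formalism, in which $\balg$ may carry extra operations and may reuse one operation for several letters), then to the subalgebra carried by the image $g(\trees \Sigma)$; on this subalgebra $g$ is a surjective homomorphism whose kernel refines $\sim_L$, so the universal property of quotients produces a surjection from it onto $\mathbf{Synt}(L)$. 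Hence $\mathbf{Synt}(L)$ is a quotient of a subalgebra of a reduct of $\balg$, i.e.\ divides $\balg$.

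I expect no serious obstacle here; the two points that need care are the hybrid argument for compatibility of $\sim_L$ with the operations, and the reduct bookkeeping imposed by working with unindexed algebras.
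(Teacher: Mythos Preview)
Your proposal is correct and follows essentially the same route as the paper: define contexts, take the Myhill--Nerode congruence ``$t \sim t'$ iff every context sends both into $L$ or both out of $L$'', and take the quotient. The paper in fact gives only a proof sketch that stops at defining the congruence, so your version, with the hybrid argument for compatibility, the kernel-refinement argument for finiteness, and the reduct--subalgebra--quotient bookkeeping for divisibility, is more detailed than what the paper provides.
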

\begin{proof}
	[Proof sketch] Define a \emph{context} over alphabet $\Sigma$ to be a term over $\Sigma$ with one variable $x$, such that the variable $x$ appears exactly once. If $p$ is a context, then it induces a natural function
	\begin{align*}
  [p] : \trees \Sigma \to \trees \Sigma 
\end{align*}
which maps a  tree $t$ to the  result of replacing $x$ with $t$  inside $p$.  Define a \emph{derivative} of $L$ to be any language of the form $\set{t : [p](t) \in L}$ for some context $p$. A classical argument in the style of the Myhill-Nerode theorem shows that the equivalence relation on $\trees \Sigma$ defined by
\begin{align*}
  t \sim t' \qquad\mbox{if $t,t'$ belong to the same derivatives of $L$}
\end{align*}
is a congruence in the algebra $\trees \Sigma$. The quotient under this equivalence is the algebra in the statement of the theorem.
\end{proof}
An algebra $\aalg$ as in the conclusion of the above theorem is called a \emph{syntactic algebra} for $L$. It is not difficult to see that the syntactic algebra is unique up to isomorphism, and hence it makes sense to talk about \emph{the} syntactic algebra. 

\subsection{A sufficient condition for the existence of an algebraic characterisation}

Let us return to the problem of classifying logics on trees, such as first-order logic. It would be nice if definability of a regular tree language $L$ by a logic $\langclass$ could be decided by only looking at the syntactic algebra of $L$. This is indeed the case, as long as the logic satisfies basic closure properties. The closure properties are Boolean combinations, derivatives (as defined in the proof of the Myhill-Nerode theorem), and inverse images under relabelings. A relabeling is simply an arity preserving function between alphabets $f : \Sigma \to \Gamma$, which can be lifted to trees in the obvious way. A class of languages $\langclass$ is called closed under inverse images of relabelings if for every language $L \subseteq \trees \Gamma$ in the class, and every relabeling $f : \Sigma \to \Gamma$, the inverse image $f^{-1}(L) \subseteq \trees \Sigma$ also belongs to the class.

\begin{thm}\label{thm:eilen1}
	Let $\langclass$ be a class of regular languages which is closed under Boolean combinations (including complementation\footnote{We would like to mention a slightly subtle point about complementation: technically speaking a regular tree language is a pair:   (the set of trees in the language, the  alphabet). Complementation depends on this alphabet, e.g.~complementing $\emptyset \subseteq \trees \Sigma$ depends on $\Sigma$. There exist very weak logics for which a set of trees might be definable  over one alphabet (e.g.~by the formula ``true'') but not over a bigger alphabet.}), inverse images of relabelings, and derivatives.  Then membership $L \in \langclass$ depends only on the syntactic algebra of $L$.
\end{thm}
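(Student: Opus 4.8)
The statement asks that membership of a regular tree language in $\langclass$ be determined by its syntactic algebra, so the plan is to fix two languages $L \subseteq \trees\Sigma$ and $L' \subseteq \trees\Gamma$ whose syntactic algebras are isomorphic, assume $L \in \langclass$, and deduce $L' \in \langclass$ (the other implication being symmetric). Identifying the two syntactic algebras along the isomorphism, I would work with a single unindexed algebra $\aalg$ and the two syntactic homomorphisms $h : \trees\Sigma \to \aalg$ and $h' : \trees\Gamma \to \aalg$; recall that these are surjective, that $L = h^{-1}(F)$ and $L' = h'^{-1}(F')$ for the accepting sets $F, F' \subseteq A$, and that (this is the point of ``syntactic'') distinct elements of $A$ are separated by some context as in the Myhill--Nerode proof. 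The degenerate case where $\Sigma$ or $\Gamma$ has no nullary letter, so that the relevant set of trees is empty, should be handled separately and is trivial.

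The first step is to move the question from the alphabet $\Gamma$ back to the alphabet $\Sigma$ by means of a relabeling. Reading off the operations of $\aalg$ from $h$ produces the family $\set{f_a : a \in \Sigma}$ and reading them off from $h'$ produces $\set{f_c : c \in \Gamma}$; since $\aalg$ is one fixed unindexed algebra these are literally the same set of operations on the carrier $A$. So every $c \in \Gamma$ can be matched with some $\rho(c) \in \Sigma$ satisfying $f_{\rho(c)} = f_c$, which forces $\rho(c)$ and $c$ to have the same arity, hence $\rho : \Gamma \to \Sigma$ is a relabeling. A routine induction on the shape of $t \in \trees\Gamma$ then gives $h(\rho(t)) = h'(t)$, i.e.~$h' = h \circ \rho$, and therefore $L' = \rho^{-1}\!\big(h^{-1}(F')\big)$. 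By closure under inverse images of relabelings, it now suffices to show $h^{-1}(F') \in \langclass$ for the fixed $\aalg$ and an \emph{arbitrary} set $F' \subseteq A$; in particular the accepting sets of $L$ and $L'$ need not be related.

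The second step recovers $h^{-1}(F')$ from $L$ using derivatives and Boolean operations. As $A$ is finite, $h^{-1}(F') = \bigcup_{b \in F'} h^{-1}(\set b)$, so by closure under Boolean combinations it is enough to treat a single $b \in A$. Each context $p$ over $\Sigma$ induces a unary map $\hat p : A \to A$ characterised by $h([p](t)) = \hat p(h(t))$, and pulling the accepting set back along it gives $h^{-1}\!\big(\hat p^{-1}(F)\big) = \set{t : [p](t) \in L}$, which is by definition a derivative of $L$ (the trivial context $x$ recovering $L$ itself). Syntacticity of $\aalg$ says precisely that the map $b \mapsto \set{p : \hat p(b) \in F}$ is injective, so --- using only finiteness of $A$ --- the singleton $\set b \subseteq A$ is a finite Boolean combination (complements included) of the subsets $\hat p^{-1}(F) \subseteq A$. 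Applying $h^{-1}$, which commutes with Boolean operations, expresses $h^{-1}(\set b)$ as the same finite Boolean combination of derivatives of $L$; since $L \in \langclass$, closure under derivatives and then under Boolean combinations yields $h^{-1}(\set b) \in \langclass$, as required.

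The one place where I expect to have to be careful is the first step: the existence of the relabeling $\rho$ relies on reading the syntactic algebra as an \emph{unindexed} algebra, so that ``$L$ and $L'$ have the same syntactic algebra'' really does force the two alphabets to induce the same operations on the common carrier; one also has to keep an eye on alphabets without nullary letters, where there are no trees at all. Everything after that is bookkeeping, and it is worth recording which closure hypothesis does what: derivatives supply the languages $\set{t : [p](t) \in L}$, Boolean combinations reassemble first the singletons $h^{-1}(\set b)$ and then the arbitrary preimage $h^{-1}(F')$, and closure under inverse images of relabelings is used exactly once, to carry the answer from $\Sigma$ back to $\Gamma$.
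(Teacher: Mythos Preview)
Your proof is correct and follows essentially the same route as the paper's: you factor the second homomorphism through the first via a relabeling $\rho$ (using that the unindexed syntactic algebra has the same set of operations from either side), and you show that every $h^{-1}(F')$ is a finite Boolean combination of derivatives of $L$ by exploiting syntacticity to separate singletons. The paper's sketch is terser --- it cites the singleton-decomposition step as a known lemma --- but your expanded account of that step is accurate and fills in exactly what the citation covers.
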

\begin{proof}[Proof sketch]
	Suppose that $L \subseteq \trees \Sigma$ is a regular language, and let 
	\begin{align*}
  h : \trees \Sigma \to \aalg
\end{align*}
be a homomorphism into the syntactic algebra which recognises $L$. Using an adaptation of the classical proof of the Eilenberg Pseudovariety Theorem, one can show that for every subset $F$ of the universe in $\aalg$,  the inverse image $h^{-1}(F)$ is a finite  Boolean combination of derivatives of $L$ (see e.g. Lemma 4.7 in~\cite{DBLP:journals/corr/Bojanczyk15}). The homomorphism $h$ must necessarily use all operations in the algebra $\aalg$, and therefore every homomorphism
\begin{align*}
  g : \trees \Gamma \to \aalg.
\end{align*}
can be decomposed as a composition $h \circ f$ where $f$ is some relabelling $\Gamma \to \Sigma$. It follows that every language recognised  by $\aalg$ can be obtained from $L$ by taking derivatives, Boolean combinations, and inverse images of relabelings. Therefore all languages recognised by $\aalg$  are also in $\langclass$.
\end{proof}

It is not difficult to see that the class of tree languages definable in first-order logic satisfies the assumptions of Theorem~\ref{thm:eilen1}, and therefore definability in first-order logic can be decided by only looking at the syntactic algebra. The theorem, unfortunately, says nothing about what specifically is the property that we are looking for, and in particular it does not lead to an algorithm deciding if a language can be defined in first-order logic (for some artificial logics satisfying the assumptions of the theorem, definability is undecidable). The same remarks apply to chain logic.

\section{On the structure of finite algebras}
\label{sec:clones}
In the previous section we explained how problems such as ``can tree language $L$  be  defined in first-order logic?'' can be reduced to studying properties of finite algebras, namely the syntactic algebra of $L$. Such an approach was eminently successful in the study of regular languages, due to the well understood structural theory of finite semigroups. What about trees and the accompanying algebras?

 There is a rich  structural theory for finite algebras, including the famous Tame Congruence Theory of Hobby and McKenzie~\cite{hobby1988structure}.  However, this theory is little known in the formal language community. One of the main goals of this paper is to give some references about the structural theory of finite algebras that could be useful to the formal language community, and make some rudimentary observations about how that theory may or may not be applied.

\subsection{Structural theory of finite algebras}
An important step in the classification of finite algebras is to consider not just the basic operations given in an algebra, but also their compositions.  Suppose that  $\aalg$ is a finite algebra, whose set of operations is $\Sigma$. We can view $\Sigma$ as a ranked alphabet.   A term over $\Sigma$ with $n$ variables defines a function $f : A^n \to A$ in the natural way, such a function is called a \emph{term operation} in $\aalg$. For example, in the lattice algebra $(2,\lor,\land)$, the ternary majority operation is a term operation, as witnessed by the following term \mypic{13}
A \emph{polynomial operation} $A^n \to A$ is defined like a term operation, except that we are allowed to use constants for any element in the algebra (in general, such constants need not be part of the operations).  For example, if $\aalg$ is the semi-lattice $(2,\land)$ then the constant 1, seen as an operation $A^0 \to A$, is a polynomial (of arity zero) but not a term operation. 

When classifying regular tree languages, the difference between polynomials and terms is insignificant. The reason is that if we have a tree language recognised by a  homomorphism $h : \trees \Sigma \to \aalg$, then the algebra $\aalg$ contains a constant for every letter in $\Sigma$ of arity zero, and  therefore every element in the image of $h$  is described by a term.  This means that the polynomial operations and the term operations are the same, at least when restricted to the image of $h$. In this particular paper, we will be mainly talk about polynomials. We write $\pol \aalg$ for the algebra obtained from $\aalg$ by adding all polynomials to the operations. We write $\pol_n \aalg$ for the set of $n$-ary polynomials in $\aalg$. We say that two algebras $\aalg, \balg$ are \emph{polynomially equivalent} if the algebras $\pol \aalg, \pol \balg$ are isomorphic.

We present below a very brief discussion of the structural theory of finite algebras. We begin with a remarkable theorem of P\'alfy, which characterises, up to polynomial equivalence, all finite algebras satisfying a certain  condition. One of the types in the characterisation is  vector spaces over  finite fields, which are viewed as algebras in the following way: the carrier is the elements of the vector space, there is a binary operation $+$ for addition of vectors, and for every $x$ in the finite field there is a unary operation for scalar multiplication $v \mapsto x \cdot v$.
\begin{thm}[P\'alfy~\cite{PPP}]\label{thm:palfy}
	Let $\aalg$ be a finite algebra which is minimal in the following sense: every polynomial $f \in \pol_1 \aalg$ is either a constant function or a bijection of the universe. Then $\aalg$ is polynomially equivalent to an algebra of one of the following types:
	\begin{enumerate}
		\item an algebra with only unary operations;
\item a vector space over a finite field;
 	 		\item the Boolean algebra $(2,\lor,\land,\neg)$;
	 		\item the lattice $(2,\lor,\land)$;
 		\item the semi-lattice $(2,\lor)$.
	\end{enumerate}
\end{thm}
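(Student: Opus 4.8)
The plan is to treat $|A| = 2$ and $|A| \geq 3$ separately; in the first case all five possibilities occur, in the second only the first two. When $|A| = 2$ the minimality hypothesis is vacuous, since every function $\set{0,1}\to\set{0,1}$ is constant or a bijection, so the statement reduces to a classification of polynomial clones on a two-element set, that is, of clones containing both nullary constants, and here I would invoke Post's description of the clone lattice on $\set{0,1}$. If such a clone $C$ is not the clone of all operations it is contained in one of the five maximal clones: $T_0$ and $T_1$ ($0$- and $1$-preserving functions), $M$ (monotone functions), $D$ (self-dual functions), $L$ (affine functions); since a clone containing both constants can be neither $0$-preserving, nor $1$-preserving, nor self-dual, we get $C \subseteq M$ or $C \subseteq L$. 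It then remains to list the subclones of $M$ and of $L$ that contain both constants: under $M$ these are $M$ itself (case~4), the clone generated by the constants and $\lor$ and the one generated by the constants and $\land$ (related by the automorphism $0 \leftrightarrow 1$, both falling under case~5), and the clone of projections and constants (case~1); under $L$ they are $L$ itself, which is the polynomial clone of the one-dimensional vector space over the two-element field (case~2), the clone generated by the constants and $\neg$ (essentially unary, case~1), and again the clone of projections and constants (case~1). This exhausts all cases.

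When $|A| \geq 3$ I would follow P\'alfy's original argument, whose conclusion is that only cases~1 and~2 occur. Let $G \leq \mathrm{Sym}(A)$ be the set of those $f \in \polar 1 \aalg$ that are permutations; it is closed under composition and, since $A$ is finite, under inverses ($f^{-1} = f^{k-1}$ when $f^k = \mathrm{id}$), so $G$ is a permutation group on $A$. If every polynomial of $\aalg$ depends on at most one variable, then $\aalg$ is polynomially equivalent to the algebra on $A$ whose operations are exactly the members of $\polar 1 \aalg$ — an algebra with only unary operations — which is case~1. Otherwise, by \'Swierczkowski's lemma (if every binary polynomial depends on at most one argument then so does every polynomial of every arity) there is a binary polynomial $p$ depending on both of its arguments, and the first substantial step is to prove that $G$ acts transitively on $A$; this is where the minimality hypothesis really enters, through the interplay between the orbit structure of $G$ and the fact that all unary sections of $p$ are constant or bijective.

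With transitivity in hand I would fix an element $0 \in A$, extract from $p$ and $G$ a binary polynomial that makes $(A,+)$ an abelian group with neutral element $0$, and then show that \emph{every} polynomial of $\aalg$ is affine with respect to this group, i.e.\ of the form $(v_1, \ldots, v_n) \mapsto c + e_1(v_1) + \cdots + e_n(v_n)$ with each $e_i$ a $0$-preserving member of $\polar 1 \aalg$. Producing the group operation as a polynomial and proving affineness is the heart of the proof and, I expect, the main obstacle; establishing transitivity of $G$ is a delicate preliminary, but it is this part that requires the real polynomial bookkeeping. Granting it, the conclusion is quick: the set $R$ of $0$-preserving unary polynomials is, under pointwise addition and composition, a finite subring of $\mathrm{End}(A,+)$, and $\aalg$ is polynomially equivalent to $(A,+)$ regarded as an $R$-module. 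By minimality each element of $R$ is either the zero endomorphism (the constant-$0$ polynomial) or a bijection, hence a unit of $R$; so $R$ is a finite ring without zero divisors, that is, a finite division ring, and therefore a finite field by Wedderburn's little theorem. Thus $\aalg$ is polynomially equivalent to a vector space over a finite field — case~2 — which completes the classification.
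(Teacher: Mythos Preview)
Your proposal is correct and matches the paper's treatment in structure: the paper does not give a self-contained proof but simply records that P\'alfy's original argument (Hobby--McKenzie, Theorem~4.7) handles the case $|A|\ge 3$, yielding only types (1) and (2), while the two-element case is the classification of minimal algebras on a two-element set (Hobby--McKenzie, Lemma~4.8). Your sketch fills in exactly these two pieces --- Post's lattice for $|A|=2$, the transitivity/abelian-group/Wedderburn route for $|A|\ge 3$ --- so the approaches coincide; you just supply the content that the paper delegates to citations. (Incidentally, the paper's sentence ``type (1) or (5)'' for $|A|\ge 3$ is a typo for ``(1) or (2)'', as your argument correctly concludes.)

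One minor point of attribution: the reduction you use --- \emph{if every binary polynomial is essentially unary then so is every polynomial} --- is true for clones containing all constants and is the right step here, but it is not what is normally called \'Swierczkowski's lemma (that lemma says that on a set with at least three elements, an operation all of whose two-variable identifications are projections is itself a projection). The fact you need is an easier folklore observation; the mathematics is fine, only the name is slightly off.
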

What P\'alfy actually proved is that if a finite algebra is minimal and its carrier has size at least three, then it is of type (1) or (5) above, see e.g.~Theorem 4.7 in~\cite{hobby1988structure}. Together with an analysis of two element algebras, see Lemma 4.8 in~\cite{hobby1988structure}, we get Theorem~\ref{thm:palfy}. Building on the above result,  Hobby and McKenzie developed a structural theory of finite algebras, called Tame Congruence Theory.  The starting point is that for every finite algebra, one can assign to some  pairs of  congruences (importantly, these pairs include all pairs of congruences such that one is included in the other, and there are no congruences in between) a type which is one of the five items in the P\'alfy theorem. It turns out that the analysis of the types that appear in an algebra yields a lot of information about the algebra itself; this is the subject of Tame Congruence Theory. The structural theory of finite algebras, including Tame Congruence Theory, has been very successful in the classification of Constraint Satisfaction Problems, see e.g.~the survey~\cite{DBLP:conf/dagstuhl/BartoKW17}. This raises hopes for a similar application to the classification of logics on trees, such as chain logic or first-order logic. So far, there are no such applications, but we hope that this paper might motivate cooperation between the two communities, eventually leading to  some progress. We only make here one small observation: if a language is definable in chain logic (or a weaker logic), then its syntactic algebra will only have types (1) and (5), as discussed below.

If a pair of congruences in a finite algebra $\aalg$ has type $i$, then one can find an algebra of type $i$ that divides $\aalg$. The class of algebras that recognise only languages in chain logic is closed under division, and it does not contain any algebras of types (2), (3), or (4), see~\cite{doktorat-bojanczyk}. It  follows that a necessary condition for a tree language to be definable in chain logic (in particular, in first-order logic), is that in its syntactic algebra, all congruence pairs must have type (1) or (5). This is not a sufficient condition. There exists a prime (i.e.~no nontrivial congruences) finite algebra $\aalg$ where the only congruence pair (i.e.~the identity congruence and the full congruence) has type (5), but the lattice $(2,\lor,\land)$ can be obtained from $\aalg$ as a subalgebra of a reduct, see Example 5 in~\cite{vanderfwerf}.

\subsection{First-order logic}
As mentioned above,  one is tempted to use the structural theory of finite algebras to classify regular languages, e.g.~to decide if a regular language can be defined in first-order logic.  In the following example, we show that specifically first-order logic might be a bad place to start.  The problem is that polynomial equivalence (which in our context is the same as term equivalence) is too coarse to decide membership in first-order logic.

\begin{exa}[\bf First-order logic is not a clone invariant]\label{ex:potthof} We show two regular tree languages such that: one is first-order definable,  the other is not,  but their syntactic algebras are polynomially equivalent. Consider the following ranked alphabet $\Sigma$: \mypic{3}Let $L \subseteq \trees \Sigma$ be  those trees where every leaf is at even depth.  
	The syntactic algebra of this language has three elements, ${0,1,\bot}$, with the functions corresponding to the letters being defined by 
	\begin{align*}
		\includegraphics[page=6,scale=0.2]{pics} = 0 \qquad \includegraphics[page=5,scale=0.2]{pics}(a)= \begin{cases}
		0 & \mbox{ if $a=1$}\\
		1 & \mbox{ if $a=0$}\\
		\bot & \mbox{ otherwise}
		\end{cases} \qquad
		\includegraphics[page=4,scale=0.2]{pics}(a,b) = \begin{cases}
			0 & \mbox{ if $a=b=1$}\\
			1 & \mbox{ if $a=b=0$}\\
			\bot & \mbox{ otherwise}
		\end{cases}
	\end{align*}
	The language $L$ is not first-order definable, for the same reasons as discussed in Example~\ref{ex:parity}, i.e.~because a formula of first-order logic cannot distinguish between the following trees for large enough $n$: \mypic{7} Define the language $K$ to be the same as $L$, except that the arity one symbol symbol  \includegraphics[page=5,scale=0.2]{pics}  is dropped from the alphabet. A surprising result by Potthoff~\cite{DBLP:phd/dnb/Potthoff94} is that the  language $K$ is first-order definable, see also page 3 in~\cite{bojanczyk-tree-algs}.  The syntactic algebra for $K$ is the same as for $L$, except that it is missing the operation corresponding to the dropped letter  \includegraphics[page=5,scale=0.2]{pics}. Nevertheless, these two algebras are polynomially equivalent (in fact, term equivalent), because  we have: \mypic{8}
\end{exa}

\subsection{Polynomial language pseudovarieties}
The problem witnessed by Example~\ref{ex:potthof} is that the  class of first-order definable tree languages is not closed under inverse images of tree homomorphisms, as described below. A function
\begin{align*}
  h : \trees \Sigma \to \trees \Gamma
\end{align*}
is called a \emph{tree homomorphism} if for every letter $a \in \Sigma$ there is some term $t_a$ over $\Gamma$ of same arity as $a$, such that $h(t)$ is obtained by replacing each letter by the corresponding term. For example, consider the homomorphism which which is defined by the following family of terms \mypic{9}
If we apply the above homomorphism to a tree without binary branching, then the result is a balanced binary tree of same depth as the input, as illustrated below \mypic{10}
It is not difficult to see that the language $L$ in Example~\ref{ex:potthof} is the inverse image, under the above homomorphism, of the language $K$ in the same example. Since $K$ is definable in first-order logic and $L$ is not, it follows that first-order logic is not closed under inverse images of tree homomorphisms\footnote{Since we already have the picture, we can explain the intuition why $K$ is first-order definable.   The main observation is the following. A balanced binary tree has all nodes (equivalently, some node) at even depth if and only if it satisfies the following property, which can be defined in first-order logic: there exists a leaf $x$ which is a first child and such such that the sequence of child numbers on the path from the root to $x$ is of the form: first child, second child, first child, second child, etc.}.

The following theorem shows that inverse images under tree homomorphisms are almost all that is necessary for being able to characterise a class of languages purely by properties of its syntactic algebra up to polynomial equivalence. (Recall that for syntactic algebras of tree languages, polynomial operations are already term operations, so term equivalence could be used in the theorem as well.)

\begin{thm} 
Let $\langclass$ be a class of regular tree languages which is closed under Boolean combinations (including complementation), inverse images of tree homomorphisms, and derivatives.  Then membership $L \in \langclass$ depends only on $\pol \aalg$ where $\aalg$ is the syntactic algebra of $L$.
\end{thm}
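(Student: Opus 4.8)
The plan is to follow the proof of Theorem~\ref{thm:eilen1} almost verbatim, upgrading ``relabeling'' to ``tree homomorphism'' everywhere. Concretely: suppose $L \subseteq \trees\Sigma$ and $M \subseteq \trees\Gamma$ are regular tree languages whose syntactic algebras $\aalg$ and $\balg$ satisfy $\pol\aalg \cong \pol\balg$; since this hypothesis is symmetric in $L$ and $M$, it is enough to prove that $L \in \langclass$ implies $M \in \langclass$. Write $h : \trees\Sigma \to \aalg$ and $g : \trees\Gamma \to \balg$ for the (surjective) syntactic homomorphisms. First I would record the remark from Section~\ref{sec:clones}: since every tree has leaves, $\Sigma$ carries a nullary symbol, hence every element of $A$ equals $h(t)$ for a closed tree $t$, hence every constant --- and then every polynomial --- of $\aalg$ is already a term operation, and the same for $\balg$. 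Thus $\pol\aalg \cong \pol\balg$ is nothing but a clone isomorphism: a bijection $\psi : B \to A$ under which conjugation carries the term operations of $\balg$ exactly onto the term operations of $\aalg$.

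The core of the argument is to convert $\psi$ into a tree homomorphism $\theta : \trees\Gamma \to \trees\Sigma$. Each letter $b \in \Gamma$, of arity $n$ say, names a basic operation $g(b) : B^n \to B$ of $\balg$, which is in particular a term operation of $\balg$; so its $\psi$-conjugate $\psi \circ g(b) \circ (\psi^{-1}, \ldots, \psi^{-1}) : A^n \to A$ is a term operation of $\aalg$, and therefore equals the operation induced by some $\Sigma$-term $t_b$ in the variables $x_1, \ldots, x_n$. I would take $\theta$ to be the tree homomorphism determined by $b \mapsto t_b$ and then show, by a routine induction on the structure of a tree $s \in \trees\Gamma$, that
\begin{align*}
  h(\theta(s)) = \psi(g(s)) \qquad\text{for every } s \in \trees\Gamma ,
\end{align*}
the inductive step being precisely the identity $t_b^{\aalg} = \psi \circ g(b) \circ (\psi^{-1}, \ldots, \psi^{-1})$. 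Writing $M = g^{-1}(F_M)$ with $F_M = g(M) \subseteq B$, this gives $M = \theta^{-1}\bigl( h^{-1}(\psi(F_M)) \bigr)$.

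To conclude, I would reuse the Myhill--Nerode/Eilenberg computation already invoked in the proof of Theorem~\ref{thm:eilen1}: because $h$ is the \emph{syntactic} homomorphism of $L$ and $\aalg$ is finite, finitely many contexts separate the elements of $\aalg$, so each class $h^{-1}(\{a\})$ --- and hence each preimage $h^{-1}(F)$ with $F \subseteq A$ --- is a finite Boolean combination of derivatives of $L$. As $\langclass$ contains $L$ and is closed under derivatives and Boolean combinations, $h^{-1}(\psi(F_M)) \in \langclass$. Finally, relabelings are a special case of tree homomorphisms, so the assumption that $\langclass$ is closed under inverse images of tree homomorphisms is even stronger than the corresponding assumption in Theorem~\ref{thm:eilen1}; applying it to $\theta$ gives $M = \theta^{-1}(h^{-1}(\psi(F_M))) \in \langclass$, as desired.

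The one step I expect to require care is the middle one: extracting the concrete tree homomorphism $\theta$ from the abstract isomorphism $\pol\aalg \cong \pol\balg$ and checking the factorization $h \circ \theta = \psi \circ g$. The subtleties are choosing the correct direction of conjugation (each implication $L \in \langclass \Rightarrow M \in \langclass$ uses one of the two inclusions packaged in the clone isomorphism), and allowing the witnessing term $t_b$ to ignore some of its variables $x_1, \ldots, x_n$; neither is a genuine obstacle. Conceptually the statement is the expected analogue of Theorem~\ref{thm:eilen1}, with relabelings replaced by tree homomorphisms and the syntactic algebra replaced by the syntactic algebra considered up to polynomial (equivalently, term) equivalence.
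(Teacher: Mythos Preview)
The paper states this theorem without proof, evidently expecting the reader to adapt the proof sketch of Theorem~\ref{thm:eilen1} by upgrading relabelings to tree homomorphisms. Your proposal carries out exactly this adaptation and is correct: the one genuinely new step is turning the isomorphism $\pol\aalg\cong\pol\balg$ into a tree homomorphism $\theta$ by realising each basic $\Gamma$-operation as a $\Sigma$-term (using, as you note, the remark from Section~\ref{sec:clones} that for a surjective syntactic homomorphism all constants are term-definable, so polynomial equivalence is clone equivalence), and verifying the factorisation $h\circ\theta=\psi\circ g$; after that, the Eilenberg-style fact that every $h^{-1}(F)$ is a Boolean combination of derivatives of $L$, together with closure under inverse images of tree homomorphisms, finishes the argument exactly as in Theorem~\ref{thm:eilen1}.
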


Let us use the name \emph{polynomial language pseudovariety} for a class of regular tree languages which satisfies the assumptions of the above theorem. As we have seen in Example~\ref{ex:potthof}, the class of first-order definable tree languages is not a polynomial language pseudovariety, which means that one cannot study first-order logic on trees purely in terms of polynomial operations. One example of a polynomial language pseudovarieties is chain logic, which can be proved using a suitably defined Ehrenfeucht-Fra\"iss\'e game. Here is another example.

\begin{exa}[\bf Path languages]\label{ex:path}
For a ranked alpahbet $\Sigma$, define $[\Sigma]$ to the set
\begin{align*}
  \set{a \in \Sigma : \mbox{$a$ has arity 0}} \cup \set{(a,i) : \mbox{$a$ has arity $n \ge 1$ and $i \in \set{1,\ldots,n}$}}
\end{align*}
A root-to-leaf path $\pi$ in a tree $t \in \trees \Sigma$ can be interpreted as a word $[t,\pi]$ over the alphabet $\Sigma$ according to the following picture: \mypic{19}
For a word language $L \subseteq [\Sigma]^*$, define  $\mathsf A L$ to be the set of trees $t \in \trees \Sigma$ such that the labelling $[t,\pi]$  of every root-to-leaf $\pi$ path belongs to $L$. A language of the form $\mathsf A L$ for $L$ a regular word language is called a \emph{universal path language}.  Universal path languages are exactly the tree languages recognised by deterministic top-down tree automata, see e.g. Section 1.6 in~\cite{tata2007}. A tree language $L$ is universal if and only if it is equal to
\begin{align*}
	\mathsf A \set{[t,\pi] : \mbox{$\pi$ is a root-to-leaf path in some $t \in L$}},
\end{align*}
in particular one can decide -- using an equality check on tree automata -- if a tree language is universal.   Define a \emph{path language} to be any tree language which is a Boolean combination of universal path languages. One can show that path languages form a polynomial language pseudovariety (see the discussion after Theorem~\ref{thm:path-dtop}). It is an open problem whether membership in this variety is decidable, see e.g.~page 27 of~\cite{doktorat-bojanczyk}.
\end{exa}

We conjecture that membership is decidable in the two polynomial language pseudovarieties described above, chain logic  and path languages, and that methods of universal algebra could be useful for this.

\section{Transducers and the matrix power}
\label{sec:matrix}
In this section, we discuss the connection between an algebraic concept (the matrix power) and a machine model (deterministic top-down transducers). We show that  these two are essentially the same thing. One corollary of this equivalence is the following  characterisation of path languages as discussed in Example~\ref{ex:path}: a tree language is a path language if and only if  it is  recognised by some matrix power of the semi-lattice $(2,\land)$, see Theorem~\ref{thm:path-dtop}. The proofs in this section are essentially  syntactic rewritings of one definition into another, and require no combinatorial insights.

\subsection{Matrix power.} The matrix power is an operation which generalises the standard (Cartesian) power of an algebra. The presentation for matrix power that we use here is based on~\cite{szendrei1990simple}, for a discussion on the history of this operation see~\cite{Taylor1975}. Let  $\aalg$ be an algebra and let $n \in \set{1,2,\ldots}$. Define the $n$-th matrix power of $\aalg$, denoted by $\aalg^{[n]}$, to be the following algebra with carrier $A^n$.  For every $k \in \set{0,1,\ldots}$ and for every tuple
\begin{align*}
  f_1,\ldots,f_n \in \pol_{n \cdot k} \aalg
\end{align*}
 of  polynomial operations  in $\aalg$, each one of arity $n \cdot k$, the matrix power  contains a $k$-ary operation defined by
\begin{align*}
	\bar a_1,\ldots, \bar a_k \in A^n \qquad \mapsto \qquad (f_1(\bar a_1,\ldots,\bar a_k),\ldots,f_1(\bar a_1,\ldots,\bar a_k)).
\end{align*}
Note how the definition depends only on the polynomials of the algebra, and hence polynomially equivalent algebras will have the same matrix powers.
In this paper we will mostly be interested in matrix powers of the semi-lattice, as discussed in the following example.

\begin{exa}[\bf Matrix powers of the semi-lattice]
	Consider  the semi-lattice $(2,\land)$. An $n$-ary operation in this algebra is either a constant in $\set{0,1}$ or a conjunction of some subset $I \subseteq \set{1,\ldots,n}$ of its arguments. A $k$-ary operation in the $n$-th matrix power of $(2,\land)$ is a tuple of such operations, each one with arguments $n \cdot k$. 
An operation in the matrix power can be viewed as a type of circuit, as in the following picture for $n=3$ and $k=2$:  \mypic{16} If we would be using matrix products of the Boolean algebra $(2,\lor,\land,\neg)$, then the operations in the matrix power would correspond to general Boolean circuits, i.e.~ones which can use all Boolean operations instead of $\land$ only.
Terms in the matrix power correspond to tree-shaped circuits as in the following picture, which shows a term with zero variables, and hence only output values: \mypic{15} 
In Theorem~\ref{thm:path-dtop} we will show that a tree language is  recognised by a homomorphism
	\begin{align*}
  h : \trees \Sigma \to (2,\land)^{[n]}.
\end{align*}
if and only if it is a path language in the sense of Example~\ref{ex:path}. 

As mentioned above, adding $\lor$ and $\neg$ to the algebra would allow use to model arbitrary Boolean circuits. In fact, this extension would allow us to capture all finite algebras in the following sense:
every finite algebra is isomorphic to a subalgebra of a reduct of some matrix power of the Boolean algebra $(2,\land,\lor,\neg)$. The idea is to encode each element of an algebra as a bit vector and use circuits to compute the values in the algebra. We do not even need negation, if we use an encoding  that produces  bit vectors with only one coordinate being true.
\end{exa}

	\subsection{Transducers.} The matrix power is intimately connected with an operation on trees which is called a \emph{deterministic top-down transducer} (\dtop). A \dtop can be viewed as a generalisation of a tree homomorphism which allows control states; conversely a tree homomorphism is the same thing as a \dtop with only one control state. 	
The syntax of a \dtop consists of the following ingredients:
	\begin{itemize}
		\item two ranked alphabets $\Sigma,\Gamma$, called the \emph{input} and \emph{output} alphabets;
				\item a finite set $Q$ of \emph{states}, together with an \emph{initial state} $q_0 \in Q$;
				\item for each letter $a \in \Sigma$ of arity $n$, a \emph{transition function}
				\begin{align*}
  \delta_a : Q \to \mathsf T_\Gamma (Q \times \set{1,\ldots,k}),
\end{align*}
where $\mathsf T_\Gamma X$ represents terms over alphabet $\Gamma$ with variables $X$.
	\end{itemize} We would like to underline that the terms produced by the transition functions do not need to use all their variables. 
	
	We now describe the semantics of a \dtop.	For each state  $q \in Q$, we define a function
\begin{align*}
f_q : \trees \Sigma \to \trees \Gamma.
\end{align*}
The definition  is by mutually recursive induction on the size of the input tree. The function $f_q$ maps a tree $a(t_1,\ldots,t_n)$ to the tree obtained from taking the term $\delta_a(q)$, which uses variables from the set $Q \times \set{1,\ldots,k}$,   and then applying the  substitution which maps variable $(p,i)$  to the tree $f_p(t_i)$ obtained from induction. 
The semantics of a \dtop is defined to be the  function $f_{q_0}$ corresponding to the initial state.  
By abuse of notation, we do not distinguish between the transducer (i.e.~its syntax) and the function that it defines (i.e.~its semantics).

The following result shows the connection between matrix power and \dtops. To the authors' best knowledge, this connection was not observed before.   
 
\begin{thm}\label{thm:matrix-power}
	Let $\aalg$ be a finite algebra, where each element of the carrier is represented by a constant. The following  conditions are equivalent for every tree language $L \subseteq \trees \Sigma$:
	\begin{itemize}
				\item $L$ is recognised by a matrix power of $\aalg$;
		\item $L$ is a Boolean combination of languages of the form
		\begin{align*}
  f^{-1}(K) \quad \mbox{ for some  \dtop $f$ and some $K$  recognised by $\aalg$.}
\end{align*}
	\end{itemize}
\end{thm}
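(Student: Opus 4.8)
The plan is to prove the equivalence in two directions, in each case by an explicit syntactic translation between the algebraic object (a homomorphism into a matrix power, together with an accepting set) and the machine object (a \dtop composed with a homomorphism into $\aalg$). The key insight to exploit is that an $n$-ary operation in $\aalg^{[n]}$ bundles together $n$ polynomials of $\aalg$, each of arity $n\cdot k$, which is exactly the shape of data that a \dtop transition produces: a $\Gamma$-term whose variables are indexed by (state, child-number) pairs, where the role of $\Gamma$ is played by the signature of $\aalg$ and the role of the state set is played by the $n$ coordinates of $A^n$. So the ``transducer'' in the statement will be a \dtop whose output alphabet is the signature of $\aalg$ (with a constant for every element of $A$, which is legitimate since the hypothesis says every element is represented by a constant), and $\aalg$ recognises the language of output trees by evaluating them as terms, i.e.\ via the unique homomorphism $\mathsf T_\Sigma(\emptyset)\to\aalg$ sending each constant to itself.

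For the direction from matrix power to \dtop-plus-$\aalg$, I would start from a homomorphism $h:\trees\Sigma\to\aalg^{[n]}$ recognising $L$ with accepting set $F\subseteq A^n$. For each letter $a\in\Sigma$ of arity $k$, the operation of $\aalg^{[n]}$ that $h$ assigns to $a$ is given by a tuple $f_1^a,\dots,f_n^a\in\pol_{n\cdot k}\aalg$. I take as the \dtop's state set the coordinates $\set{1,\dots,n}$, initial state $1$ (or I introduce one extra initial state handled separately, depending on how I want to read off the final condition), and set $\delta_a(j)$ to be the term over the signature of $\aalg$ obtained by writing $f_j^a$ as a term and renaming its $m$-th argument ($m$ ranging over $1,\dots,n\cdot k$) to the variable $(p,i)$ where $i=\lceil m/n\rceil$ is the child number and $p = m - (i-1)n$ is the coordinate. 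An induction on the input tree then shows that the $j$-th coordinate of $h(t)$ equals the value in $\aalg$ of the output tree $f_j(t)$ produced by the \dtop started in state $j$. Hence $L = \bigcup_{\bar b\in F} \bigcap_{j} f_j^{-1}(K_{b_j})$ where $K_b = \set{s\in\trees(\text{signature of }\aalg) : \text{$s$ evaluates to $b$ in $\aalg$}}$ is recognised by $\aalg$; this is a Boolean combination of the required form (a finite union of finite intersections, and each $K_b$ is an $\aalg$-recognisable language).

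For the converse, it suffices — because the target class (languages recognised by matrix powers of $\aalg$) is closed under Boolean combinations, which I should note follows from the matrix power being closed under Cartesian products and the carrier $A^n$ being finite, so a product of matrix powers embeds in a larger matrix power — to handle a single language $f^{-1}(K)$ for one \dtop $f$ with state set $Q$ and one $K$ recognised by $\aalg$ via a homomorphism. Composing, $K$'s recognising homomorphism applied to $f$ gives for each input letter $a$ of arity $k$ and each state $q$ a value in $A$ computed by a polynomial of $\aalg$ of arity $|Q|\cdot k$ from the $A$-values of $f_p(t_i)$. Setting $n=|Q|$ (after fixing a bijection $Q\cong\set{1,\dots,n}$ sending $q_0$ to $1$), these polynomials assemble, letter by letter, into the operations of a homomorphism $\trees\Sigma\to\aalg^{[n]}$ whose $q$-th coordinate tracks the $\aalg$-value of $f_q(t)$, and $f^{-1}(K)$ is then the preimage of $\set{\bar b : b_{q_0}\in F_K}$. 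The main obstacle — really the only place where care is needed rather than routine unwinding — is the bookkeeping of arities and the variable-renaming convention: matching the linear index $1,\dots,n\cdot k$ of a polynomial argument in $\aalg^{[n]}$ against the pair $(\text{coordinate},\text{child})$ used by the \dtop, and making sure the inductive invariant (``$j$-th coordinate of $h(t)$ = value of $f_j(t)$ in $\aalg$'') is stated precisely enough that the induction step goes through for terms that ignore some of their variables, which the paper explicitly allows. Everything else is a direct rewriting of one definition into the other, with no combinatorial content.
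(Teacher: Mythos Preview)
Your proposal is correct and follows essentially the same route as the paper: in both directions the coordinates of $A^n$ are identified with the states of a \dtop whose output alphabet is the signature of $\aalg$, the transition $\delta_a(j)$ is the $j$-th polynomial in the tuple defining the operation for $a$, and an induction on the input tree shows that the $j$-th coordinate of $h(t)$ equals the $\aalg$-value of the \dtop's output from state $j$; the Boolean-combination step is handled by noting that matrix powers subsume Cartesian products. Your write-up is in fact a bit more explicit than the paper's about the argument-index bookkeeping (the $\lceil m/n\rceil$ convention) and about why terms that drop variables cause no trouble, but the content is the same.
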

\begin{proof} The proof of this theorem is simply by unfolding the definitions.
	Let us begin with the top down implication. Consider a homomorphism
	\begin{align*}
  h : \trees \Sigma \to \aalg^{[n]}.
\end{align*}
Let  $\Gamma$ be the set of operations in the algebra $\aalg$, including one constant per element of its carrier, and  consider the homomorphism
\begin{align*}
\_^\aalg : \trees \Gamma \to \aalg 
\end{align*}
which inputs a tree built out of operations and simply evaluates them bottom up.  

\begin{lem}
For every $i \in \set{1,\ldots,n}$ there is a \dtop $f_i$ which makes the following diagram commute:
	\begin{align}\label{eq:com-dtop}
  \xymatrix{ \trees \Sigma \ar[d]_{f_i}\ar[r]^{h}& A^n \ar[d]^{\text{projection to $i$-th coordinate}} \\
  \trees \Gamma \ar[r]_{\_^\aalg} & A}
\end{align}
	\end{lem}
\begin{proof}[Proof of the lemma]
The only dependence of $f_i$ on $i$ is the choice of initial state, otherwise the \dtop\ \!s $f_1,\ldots,f_n$ are the same. The states of the \dtop $f_i$ are $\set{1,\ldots,n}$ and the initial state is $i$. The input alphabet is $\Sigma$ and the output alphabet is $\Gamma$. For a letter $a$ of arity $k$, the  transition relation $\delta_a$ of the \dtop maps a state $i \in \set{1,\ldots,n}$ to the $i$-th polynomial in the $n$-tuple of $(n \cdot k)$-ary polynomials which define the operation of $\aalg^{[m]}$ that corresponds to the  letter $a$ under the homomorphism $h$. By induction on the depth of a tree $t \in \trees \Sigma$, one shows that if we apply to $t$ the functions corresponding to the two paths in the diagram from the statement of the lemma (i.e.~right-down or down-right), then the resulting values are the same.  This completes the proof of the lemma.
\end{proof}
Using the above lemma, we complete the proof of  the top-down implication in the theorem. By the lemma,  for every $a \in A$ and every $i \in \set{1,\ldots,n}$, the set
\begin{align}\label{eq:slice}
  \set{ t \in \trees{ \Sigma} : h(t) \mbox{ has $a$ on coordinate $i$}}
\end{align}
is the inverse image, under  some \dtop, of some language recognised by $\alg$. This  completes the top-down implication in the theorem, because every language recognised by $h$ is a Boolean combination of languages of the form~\eqref{eq:slice}.  

To  prove the bottom-up implication in the theorem, we use the following lemma.
\begin{lem}
Let $f : \trees \Sigma \to \trees \Gamma$ be a \dtop, and let  $g : \trees \Gamma \to \aalg$ be a homomorphism. Every language  recognised by $g \circ f$  is recognised by some  homomorphism from $\trees \Sigma$ into some matrix power of $\aalg$.	
\end{lem}
\begin{proof} We assume without loss of generality that the states of the \dtop recognising $f$ are numbers $\set{1,\ldots,n}$. Consider a homomorphism
\begin{align*}
  h : \trees \Sigma \to \aalg^{[n]}
\end{align*}
defined as follows. A letter $a$ of arity $k$ is mapped to the tuple of polynomials 
\begin{align*}
  ( p_1,\ldots,p_n) \in \pol_{n \cdot k} \aalg
\end{align*}
such that $p_i$ is the result of applying the transition  function $\delta_a$  of $f$ to the state $i$ and then evaluating the resulting term over $\Gamma$ in the algebra $\aalg$ via the homomorphism $g$. By induction on the size of $t$  we show that
\begin{align*}
  h(t) = (g(f_1(t)),\ldots,g(f_n(t))) \qquad \mbox{for every $t \in \trees{\Sigma}$},
\end{align*}
where $f_i$ is the \dtop obtained from $i$ by changing the initial state to $i$. In particular, membership of a tree $t$ in any language recognised by $g \circ f$ can be determined by looking at the coordinate of $h(t)$ which corresponds to the initial state of $f$. This completes the proof  the lemma.
\end{proof}
The above lemma shows that every language $f^{-1}(K)$ as in the statement of the theorem is recognised by some matrix power of $\aalg$. To complete the proof of the bottom-up implication in the theorem, we observe that the set of languages
\begin{align*}
  \set{ L \subseteq \trees \Sigma : \mbox{$L$ is recognised by some homomorphism into some matrix power $\aalg^{[n]}$}}
\end{align*}
is  closed under Boolean operations (the idea is that the matrix  power generalises the Cartesian power). 
	\end{proof}

Following~\cite{vanderfwerf}, see the Definition on p.~40, for a class of algebras $\algclass$ define $\matrclos \algclass$ to be the class of matrix powers of algebras from $\algclass$. Inverse images under \dtops commute with Boolean operations, i.e.~if $f$ is a \dtop, in fact any function, then 
\begin{align*}
  f^{-1}(K \cup L) = f^{-1}(K) \cup f^{-1}(L),
\end{align*}
likewise for other Boolean operations. Combining this observation with  Theorem~\ref{thm:matrix-power}, we see that  if $\mathscr A$ is a class of algebras, then the  languages  recognised  by algebras from $\matrclos \algclass$ are exactly the smallest class of languages that contains all languages recognised by algebras from $\matrclos \algclass$ and which is closed under Boolean operations and inverse images under \dtops.

\subsection{Path languages and matrix power}
We now apply Theorem~\ref{thm:matrix-power} to give a characterisation of the path languages that were discussed in Example~\ref{ex:path}. The characterisation below is not effective, in the sense that it does not give an algorithm which decides if a tree language is a path language. Our hope, however, is that drawing the connection between path languages and algebra will make it easier to eventually find an effective characterisation of path languages.
\begin{thm}\label{thm:path-dtop}
	A tree language is recognised by an algebra from $\matrclos \set{(2,\land)}$ (i.e.~by a matrix power of the semi-lattice) if and only if it is a path language (equivalently, a Boolean combination of languages recognised by deterministic top-down tree automata).
\end{thm}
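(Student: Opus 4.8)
The plan is to invoke Theorem~\ref{thm:matrix-power} as a black box and then rewrite its conclusion into the definition of path language, so that no genuinely new combinatorics is required. First I would note that the semi-lattice $(2,\land)$ and its expansion $(2,\land,0,1)$ by the two constants have the same polynomial operations, hence the same matrix powers; so ``recognised by a matrix power of $(2,\land)$'' is the same as ``recognised by a matrix power of $(2,\land,0,1)$'', and the latter algebra does satisfy the hypothesis of Theorem~\ref{thm:matrix-power} that every carrier element be a constant. Applying that theorem with $\aalg=(2,\land,0,1)$ reduces the statement to: the Boolean combinations of languages $f^{-1}(K)$, with $f$ a \dtop and $K$ recognised by $(2,\land,0,1)$, coincide with the path languages. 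Since both of these classes are closed under Boolean combinations --- the path languages by definition, the other one trivially --- it suffices to prove the two inclusions: \textbf{(a)} every universal path language is of the form $f^{-1}(K)$ for some \dtop $f$ and some $K$ recognised by $(2,\land,0,1)$; and \textbf{(b)} conversely every such $f^{-1}(K)$ is a path language.

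For \textbf{(a)} I would take a universal path language $\mathsf A L$ with $L\subseteq[\Sigma]^*$ regular, fix a \textsc{dfa} for $L$ with states $Q$, initial state $q_0$ and transition function $\delta$, and read off a \dtop $f$ with state set $Q$ and initial state $q_0$ over the output alphabet $\set{\land,0,1}$: on a letter $a$ of arity $n\ge 1$ in state $q$ it emits the right-combed conjunction of its $n$ subtrees, the $i$-th processed in state $\delta(q,(a,i))$; on a leaf $a$ in state $q$ it emits the nullary constant $1$ or $0$ according to whether $\delta(q,a)$ is an accepting state. With $g\colon\trees{\set{\land,0,1}}\to(2,\land,0,1)$ the evaluation homomorphism and $K=g^{-1}(1)$, an induction on $t$ gives $g(f(t))=1$ iff $[t,\pi]\in L$ for every root-to-leaf path $\pi$; that is, $f^{-1}(K)=\mathsf A L$. (Equivalently one may quote the stated equivalence ``universal path language $=$ language recognised by a deterministic top-down tree automaton'' and take $f$ and $g$ directly from the automaton.)

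For \textbf{(b)} let $g\colon\trees\Gamma\to(2,\land,0,1)$ recognise $K$ with accepting set $F$. The cases $F\in\set{\emptyset,\set{0,1}}$ are trivial, and $F=\set 0$ reduces to $F=\set 1$ by complementation (path languages are closed under complement), so assume $F=\set 1$. Writing $f_q$ for the $q$-th component of $f$ and setting $H(t,q)\eqdef g(f_q(t))\in\set{0,1}$, the recursion defining the \dtop together with the fact that $g$ evaluates $\Gamma$-terms using only binary conjunction and the two constants shows that, for each letter $a$ of arity $n$ and each state $q$, the value $H(a(t_1,\dots,t_n),q)$ is --- viewed as a function of the family $\big(H(t_i,p)\big)_{i\le n,\ p\in Q}$ --- either a constant $c_{a,q}\in\set{0,1}$ or a conjunction $\bigwedge_{(i,p)\in S_{a,q}}H(t_i,p)$ over a fixed nonempty set $S_{a,q}$ of pairs, and at a leaf it is always a constant. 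Unfolding this recursion, $H(t,q_0)=0$ precisely when $t$ admits a \emph{refuting trace}: a path from the root along which the accompanying sequence of states, started at $q_0$ and transported through the sets $S_{a,q}$, reaches a node whose data $(a,q)$ has $c_{a,q}=0$. The key observation is that whether a refuting trace runs along a given root-to-leaf path $\pi$ depends on $t$ only through the word $[t,\pi]$: it amounts to acceptance of a prefix of $[t,\pi]$ by the finite automaton over $[\Sigma]$ which nondeterministically chases the sets $S_{a,q}$ and accepts on reaching a constant-$0$ configuration. Hence $W\eqdef\set{w\in[\Sigma]^* : w \text{ admits a refuting trace}}$ is a regular word language, and $f^{-1}(K)=\set{t : H(t,q_0)=1}=\mathsf A([\Sigma]^*\setminus W)$ is a universal path language, in particular a path language.

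The inductions in (a) and (b), the associativity/idempotency bookkeeping that turns arbitrary-arity outputs into iterated binary conjunctions, and the construction of the automaton for $W$ are all routine. The one place where I would expect to spend real effort is the claim in \textbf{(b)} that, once the \dtop has been folded into the state bookkeeping, ``being refuting'' is genuinely a path-local property: verifying that $W$ is well defined as a set of words and correctly captures $H(\cdot,q_0)=0$ requires pinning down the positional indexing (which symbol of $[t,\pi]$ carries which $(a,i)$, where a trace is permitted to stop, and how this interacts with constant operations produced at internal nodes), and this is where the genuine content of the proof sits.
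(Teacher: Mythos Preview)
Your proposal is correct and follows essentially the same route as the paper: reduce to $(2,\land,0,1)$ via polynomial equivalence, apply Theorem~\ref{thm:matrix-power}, and then identify languages of the form $(h\circ f)^{-1}(1)$ (for $f$ a \dtop into $\trees\set{\land,0,1}$ and $h$ the evaluation map) with universal path languages. The paper packages this last identification as a lemma whose proof it omits as ``simply an unfolding of the definitions''; your parts \textbf{(a)} and \textbf{(b)} supply exactly that unfolding, and your refuting-trace argument in \textbf{(b)} is the natural way to see that the complement of $(h\circ f)^{-1}(1)$ is detected by a nondeterministic top-down automaton, hence $(h\circ f)^{-1}(1)$ itself is a universal path language.
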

\begin{proof} Note that under the matrix power of an algebra $\aalg$ depends only on its polynomials, and therefore for every $n$, the $n$-th matrix power is the same for $(2,\land)$ as for $(2,\land,0,1)$. Therefore, from  Theorem~\ref{thm:matrix-power} it follows that a language is recognised by a matrix power of $(2,\land)$ if and only if it is a Boolean combination of  inverse images under \dtops of languages recognised by  $(2,\land,0,1)$. From this it is not difficult to see that a language is recognised by some matrix power of $(2,\land)$ if and only if it is a Boolean combination of languages of the form
\begin{align}\label{eq:booleval-lang}
(h \circ f)^{-1}(1) \qquad \mbox{for some } \underbrace {f : \trees \Sigma \to \trees \set{\land,0,1}}_{\text{\dtop}} ,
\end{align}
where $h : \trees \set{\land,0,1} \to \set{0,1}$ is the function which evaluated a Boolean expression. To complete the proof of the theorem, we use the following lemma, whose straightforward proof is simply an unfolding of the definitions, and is omitted here.
\begin{lem}
A tree language is of the form~\eqref{eq:booleval-lang} if and only if it is a universal path language (equivalently, it is recognised by a deterministic top-down tree automaton).
\end{lem}
\end{proof}

One corollary of the above theorem, and the Myhill-Nerode theorem, is that a tree language is a path language if and only if its syntactic algebra divides a matrix power of the semi-lattice. Deciding the latter property, when given a finite algebra, is an open problem to the authors' best knowledge. 

Another corollary of the above theorem is that path languages are closed under inverse images of tree homomorphisms, which is the harder part of showing that path languages form a polynomial language variety. The reason is that for every class of algebras $\algclass$ which is closed under polynomial equivalence (such as matrix powers of the semi-lattice), the class of tree languages recognised by algebras from $\algclass$ is closed under inverse images of tree homomorphisms.

\begin{exa}[\bf Doubly deterministic tree languages]
Let $\algclass$ be the class of algebras which have only unary operations. One can show that a tree language is recognised by an algebra from $\matrclos \algclass$ if and only if it is \emph{doubly deterministic} in the following sense: both the language and its complement are recognised by deterministic top-down tree automata (equivalently, the language and its complement are both universal path languages). Examples of doubly deterministic tree languages include ``the root label is $a$'', or ``the left most leaf is $c$''. It is decidable if a tree language is recognised by a deterministic top-down tree automaton, and therefore it is decidable if a tree language is doubly deterministic. The class of algebras $\matrclos \algclass$ was studied in~\cite{szendrei1990simple}.
\end{exa}

\section{Language nesting and the wreath product}
\label{sec:wreath}
In the previous section, we discussed connections between the matrix power and \dtops. In this section, we discuss another  connection of this type: the wreath product of algebras corresponds to nesting of tree languages. This connection is  known in the logic and  automata community, where sometimes the name \emph{cascade product} is used instead of wreath product. For word languages, the connection between wreath product and nesting  dates back to  the folklore observation that wreath products of transformation semigroup $U_2$, as in the Krohn-Rhodes Theorem, have the same recognising power as formulas of linear temporal logic. The generalisation to trees, as discussed in this section, has been observed in~\cite{doktorat-bojanczyk,DBLP:journals/corr/abs-1208-6172,ESIK2006136}.

For algebras $\aalg$ and $\balg$, define their \emph{wreath product} $\aalg \circ \balg$ to be the following algebra. Its carrier is the Cartesian product $A \times B$ of the carriers in the underlying algebras. 
The operations in the wreath product correspond to pairs $(\alpha,f)$ such that  $f$ is an operation  in $\balg$, and $\alpha$ is a  function from $B$ to operations in $\aalg$ of the same arity as $f$. If $f$ has arity $n$, then the operation corresponding to a pair $(\alpha,f)$ is the following $n$-ary operation:
\begin{align*}
  (a_1,b_1),\ldots,(a_n,b_n) \qquad \mapsto \qquad (a,b) \mbox{ where } \begin{cases}
  	b= f(b_1,\ldots,b_n) \\
  	a= (\alpha(b))(a_1,\ldots,a_n)
  \end{cases}
\end{align*}	
This completes the definition of the wreath product. 

To get a feeling for the wreath product, consider the following characterisation, which roughly corresponds to Straubing's \emph{wreath product principle}~\cite{STRAUBING1979305}.
For a ranked alphabet $\Sigma$ and an (unranked) set $X$, define $\Sigma \times X$ to be the ranked alphabet obtained by taking the Cartesian product, and  inheriting the arity information from $\Sigma$. For homomorphisms
\begin{align}\label{eq:seq-comp}
h : \trees \Sigma \to \balg \qquad   g : \trees (\Sigma \times B) \to \aalg 
\end{align}
define their \emph{sequential composition}  to be the function
\begin{align*}
  t \in \trees \Sigma \qquad \mapsto \qquad  (g(t^h),h(t)) \in A \times B
\end{align*}
where the tree $t^h \in \trees (\Sigma \times B)$ is obtained from $t$ by labelling each node with the pair (label in $t$, value under $h$ of the subtree).  The following observation shows that wreath product is essentially the same thing as sequential composition of homomorphisms.
\begin{lem}\label{lem:wreath-principle}
A  function
  $f : \trees \Sigma \to A \times B$
is a homomorphism into $\aalg \circ \balg$ if and only if it is equal to a sequential composition of some homomorphisms as in~\eqref{eq:seq-comp}.\end{lem}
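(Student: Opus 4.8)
The plan is to prove the two directions of the biconditional by direct computation, unwinding the definition of the wreath product on one side and the definition of sequential composition on the other, and checking that they produce the same operations on the carrier $A \times B$.

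First I would handle the ``if'' direction. Suppose $f$ is a sequential composition of homomorphisms $h : \trees \Sigma \to \balg$ and $g : \trees (\Sigma \times B) \to \aalg$ as in~\eqref{eq:seq-comp}, so $f(t) = (g(t^h), h(t))$. For each letter $a \in \Sigma$ of arity $n$, since $h$ is a homomorphism there is an $n$-ary operation, call it $f^\balg_a$, in $\balg$ with $h(a(t_1,\ldots,t_n)) = f^\balg_a(h(t_1),\ldots,h(t_n))$; and since $g$ is a homomorphism, for each $b \in B$ there is an $n$-ary operation, call it $g^\aalg_{(a,b)}$, in $\aalg$ corresponding to the letter $(a,b) \in \Sigma \times B$. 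I would then set $\alpha_a : B \to (\text{operations of }\aalg\text{ of arity }n)$ by $\alpha_a(b) = g^\aalg_{(a,b)}$. The key observation is that $\big(a(t_1,\ldots,t_n)\big)^h$ has root labelled $(a, h(a(t_1,\ldots,t_n)))$ and its $i$-th immediate subtree is $t_i^h$; so applying $g$ gives $g^\aalg_{(a,b)}(g(t_1^h),\ldots,g(t_n^h))$ where $b = h(a(t_1,\ldots,t_n)) = f^\balg_a(h(t_1),\ldots,h(t_n))$. Comparing with the displayed formula for the operation of $\aalg \circ \balg$ attached to the pair $(\alpha_a, f^\balg_a)$, this is exactly $f$ being a homomorphism into $\aalg \circ \balg$. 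This direction is a straightforward matching of notation once the subtlety about the root label of $t^h$ is noted.

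Next I would handle the ``only if'' direction. Suppose $f : \trees \Sigma \to A \times B$ is a homomorphism into $\aalg \circ \balg$, so for every $a \in \Sigma$ of arity $n$ there is a pair $(\alpha_a, f^\balg_a)$ with $\alpha_a : B \to (\text{arity-}n\text{ operations of }\aalg)$ and $f^\balg_a$ an arity-$n$ operation of $\balg$. Define $h : \trees \Sigma \to \balg$ using the operations $f^\balg_a$; by an easy induction on the size of $t$ one shows $h$ is the second component of $f$, i.e.~$f(t) = (\pi_A f(t), h(t))$. Then define $g : \trees (\Sigma \times B) \to \aalg$ by declaring the operation attached to the letter $(a,b)$ to be $\alpha_a(b)$. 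It remains to verify, again by induction on the size of $t$, that $\pi_A f(t) = g(t^h)$; the induction step uses precisely the defining equation of the wreath product operation, $a = (\alpha(b))(a_1,\ldots,a_n)$ with $b = f^\balg_a(b_1,\ldots,b_n) = h(a(t_1,\ldots,t_n))$, together with the fact (from the previous direction's observation) that the root of $t^h$ carries the label $(a, h(t))$. Hence $f(t) = (g(t^h), h(t))$ is a sequential composition.

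I expect no serious obstacle here: the proof is ``unfolding the definitions,'' in the same spirit as the proof of Theorem~\ref{thm:matrix-power}. The one point that requires a little care --- and which I would state explicitly --- is the bookkeeping of which value of $B$ decorates the root of $t^h$: it is the value of $h$ on the \emph{whole} tree $t$, not on a proper subtree, and this is exactly what the wreath product's ``$a = (\alpha(b))(a_1,\ldots,a_n)$ with $b = f(b_1,\ldots,b_n)$'' clause encodes. A second minor point worth a sentence is that we are free to index the (unindexed) algebras $\aalg$, $\balg$, $\aalg\circ\balg$ by the relevant alphabets, as was already done when homomorphisms out of $\trees\Sigma$ were defined, so that the phrase ``homomorphism'' means the same thing on both sides of the equivalence.
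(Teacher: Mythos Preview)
Your proposal is correct and is exactly the kind of argument the paper has in mind: the paper omits the proof entirely, saying only that it is ``left to the reader'' and ``follows the same lines as Theorem 4.2 in~\cite{DBLP:journals/corr/abs-1208-6172}'', i.e.\ a direct unfolding of the definitions. Your careful bookkeeping of the root label of $t^h$ (that it carries $h$ of the \emph{whole} tree, which is precisely the $b = f(b_1,\ldots,b_n)$ appearing in the wreath product clause) is the one nontrivial point, and you have it right.
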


 The proof of the observation is left to the reader; it follows the same lines as Theorem 4.2 in~\cite{DBLP:journals/corr/abs-1208-6172}. 
 
Let us further restate the correspondence between wreath product and sequential composition, only this time using the terminology of nesting tree languages. The idea of nesting tree languages (or words languages) comes from the study of abstract operators in temporal logics, see e.g.~\cite{DBLP:journals/logcom/BeauquierR02} for the word case or~\cite{ESIK2006136} for the tree case. To define nesting of tree languages, consider the operation 
\begin{align*}
(L_1,\ldots,L_n \subseteq \trees \Sigma,
  t \in \trees \Sigma) \qquad \mapsto \qquad t^{L_1,\ldots,L_n} \in \trees (\Sigma \times 2^n)
\end{align*}
which simply extends the label of each node $v$ of $t$ by the bit-vector indicating which of the languages among $L_1,\ldots,L_n$ contain the subtree of $t$ rooted in $v$. We say that a class of tree languages $\langclass$ is \emph{closed under nesting} if for every tree languages
\begin{align*}
L_1,\ldots,L_n \subseteq \trees \Sigma \qquad L \subseteq \trees (\Sigma \times 2^n)
\end{align*}
that are in the class $\langclass$, also the following language is in $\langclass$:
\begin{align*}
  \set{ t  \in \trees \Sigma :  t^{L_1,\ldots,L_n} \in L}.
\end{align*}

The following theorem is yet another variant of the wreath product principle.

\begin{thm}\label{thm:wreath}
	Let $\algclass$ be a class of finite algebras, and let $\mathsf{W} \algclass$ be the least class of finite algebras which contains $\algclass$ and is closed under wreath products. Then the class of languages recognised by algebras from $\mathsf{W} \algclass$ is the smallest class of languages that is closed under nesting, and contains all languages recognised by algebras from $\algclass$.
	\end{thm}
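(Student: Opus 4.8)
The plan is to prove both inclusions using Lemma~\ref{lem:wreath-principle}, which already identifies homomorphisms into a wreath product $\aalg \circ \balg$ with sequential compositions of homomorphisms, together with the observation that sequential composition is essentially the same operation as nesting of languages. First I would verify the easy direction: the class of languages recognised by algebras from $\mathsf{W}\algclass$ contains all languages recognised by algebras in $\algclass$ (trivially, since $\algclass \subseteq \mathsf{W}\algclass$) and is closed under nesting. For closure under nesting, suppose $L_1,\ldots,L_n \subseteq \trees\Sigma$ are recognised by homomorphisms into algebras $\balg_1,\ldots,\balg_n$ from $\mathsf{W}\algclass$, and $L \subseteq \trees(\Sigma \times 2^n)$ is recognised by a homomorphism into $\aalg \in \mathsf{W}\algclass$. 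The product $\balg = \balg_1 \times \cdots \times \balg_n$ lies in $\mathsf{W}\algclass$ (a Cartesian product is a special case of a wreath product, with the connecting function $\alpha$ constant), and there is a homomorphism $h : \trees\Sigma \to \balg$ tracking the values of all $L_i$; composing the relabeling $\Sigma \times B \to \Sigma \times 2^n$ (read off from the accepting sets of the $\balg_i$) with the recogniser of $L$ gives a homomorphism $g : \trees(\Sigma \times B) \to \aalg'$ for some $\aalg'$ obtained from $\aalg$ by a relabeling-induced reduct/quotient, still in $\mathsf{W}\algclass$ up to the needed closure. Then $t \mapsto t^{L_1,\ldots,L_n} \in L$ is recognised by the sequential composition of $h$ and $g$, which by Lemma~\ref{lem:wreath-principle} is a homomorphism into $\aalg' \circ \balg \in \mathsf{W}\algclass$. (A small technical point: $\mathsf{W}\algclass$ as literally defined is closed only under wreath products, not products or relabeling-reducts, so one either argues that $\mathsf{W}\algclass$ is automatically closed under these operations — a Cartesian power embeds in a wreath product, and reducts/subalgebras/quotients of recognisers do not enlarge the recognised languages — or one works throughout with ``recognised by'' rather than ``isomorphic to''. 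I would take the latter route to keep things clean.)

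For the converse, I would show that the smallest nesting-closed class containing the $\algclass$-recognisable languages already contains every language recognised by an algebra from $\mathsf{W}\algclass$. This goes by induction on the construction of $\mathsf{W}\algclass$ as an iterated wreath product: the base case $\algclass$ is immediate, and for the inductive step one must show that if $\calg = \aalg \circ \balg$ and every language recognised by $\aalg$ and every language recognised by $\balg$ is in the nesting-closed class, then so is every language recognised by $\calg$. A language recognised by $\calg$ is $f^{-1}(F)$ for a homomorphism $f : \trees\Sigma \to \calg = \aalg \circ \balg$ and $F \subseteq A \times B$; since nesting-closed classes are in particular closed under Boolean combinations — which I would derive, as in the matrix-power section, from the fact that a Cartesian product is a degenerate wreath product so one can always take products of recognisers and the nesting operation with $L$ a cylinder gives intersections, complements, etc. — it suffices to handle $F = \{(a,b)\}$ a singleton, i.e.\ the language $\{t : f(t) = (a,b)\}$. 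By Lemma~\ref{lem:wreath-principle}, $f$ is the sequential composition of some $h : \trees\Sigma \to \balg$ and $g : \trees(\Sigma \times B) \to \aalg$, so $f(t) = (a,b)$ iff $h(t) = b$ and $g(t^h) = a$. The first conjunct is the preimage under $h$ of $\{b\}$, hence in the class; for the second, write $B = \{b_1,\ldots,b_m\}$, let $L_i = h^{-1}(b_i)$, and observe that $t^{L_1,\ldots,L_m}$ determines $t^h$ by a fixed relabeling $\Sigma \times 2^m \to \Sigma \times B$ (the bit-vector with a single $1$ in position $i$ encodes the value $b_i$); inverse images under relabelings of class-members are class-members (a relabeling is a very simple nesting, or alternatively: its recogniser is obtained by precomposition), so $\{t : g(t^h) = a\}$ is the nesting of the $L_i$ with the appropriately relabeled preimage $g^{-1}(a) \subseteq \trees(\Sigma \times 2^m)$, which is in the class since $g^{-1}(a)$ is recognised by $\aalg$. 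The intersection of the two conjuncts stays in the class by Boolean closure, completing the induction.

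The main obstacle, and the place where I would spend the most care, is the bookkeeping around alphabets and relabelings in the converse direction: making sure that the languages $L_i = h^{-1}(b_i)$ and the language recognised by $g$ live over exactly the alphabets that the nesting operation expects ($\trees\Sigma$ and $\trees(\Sigma \times 2^m)$ respectively), and that the relabeling $\Sigma \times 2^m \to \Sigma \times B$ used to reconcile $t^{L_1,\ldots,L_m}$ with $t^h$ is honest — in particular that only the bit-vectors actually arising (exactly one coordinate true) need to be handled, and that on the remaining ``junk'' bit-vectors one can extend $g$ arbitrarily without affecting which language is defined. A secondary subtlety is that the definition of $\mathsf{W}\algclass$ is stated only in terms of closure under wreath product, so to make the induction on ``the construction of $\mathsf{W}\algclass$'' precise one should note that $\mathsf{W}\algclass$ is the union over $k$ of the $k$-fold wreath products, and that wreath product is associative up to the recognising power we care about (again following the pattern of Theorem 4.2 in~\cite{DBLP:journals/corr/abs-1208-6172}), so the induction is well-founded. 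None of this requires combinatorial insight — like Theorem~\ref{thm:matrix-power}, it is a syntactic unfolding — but the relabeling bookkeeping is the step most likely to hide an error if done too quickly.
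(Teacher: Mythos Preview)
Your proposal is correct and follows exactly the approach the paper indicates: the paper does not actually give a proof of Theorem~\ref{thm:wreath} but says it is ``a relatively straightforward application of Lemma~\ref{lem:wreath-principle}'' and points to analogous results in the cited references, which is precisely the route you take (forward direction by building the wreath product from a sequential composition, converse by induction on the wreath-product depth using Lemma~\ref{lem:wreath-principle} to decompose). The bookkeeping issues you flag---Boolean closure of the nesting-closed class, reconciling $t^{L_1,\dots,L_m}$ with $t^h$ via a relabeling, and the fact that $\mathsf W\algclass$ is only literally closed under wreath products---are exactly the details the paper sweeps under ``straightforward,'' and your plan for handling them (work with ``recognised by'' rather than ``isomorphic to,'' and treat Cartesian products as reducts of wreath products) is the standard one used in the references the paper cites.
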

	
The proof of the above theorem is a relatively straightforward  application of Lemma~\ref{lem:wreath-principle}, and hence we omit it here. The  proof is similar to Theorem 31 in~\cite{ESIK2006136}, or Theorems 2.5.7 and 2.5.9 in~\cite{doktorat-bojanczyk}, which use the terminology of \emph{cascade product} instead of wreath product. An unranked version of the  theorem can also be found in Corollary 5.1 of~\cite{DBLP:journals/corr/abs-1208-6172}. In the following two examples, we show wreath product characterisations of two logics on trees, namely chain logic and a variant of the temporal logic {\sc ctl}.

\begin{exa}[\bf Chain logic as a class of algebras]\label{ex:chain-mw}
This example is a small variation on Theorem 2.5.9 in~\cite{doktorat-bojanczyk}. Consider the class 
 \begin{align*}
  \algclass \eqdef \wreaclos \matrclos  \set{(2,\land)}
\end{align*}
i.e.~algebras which are wreath products of matrix powers of the semi-lattice. By Lemma 46 in~\cite{vanderfwerf}, this is the same as matrix powers of wreath products of the semi-lattice, and in particular the class $\algclass$ is closed under both wreath products and matrix powers.
  By Theorems~\ref{thm:path-dtop} and~\ref{thm:wreath}, the tree  languages recognised by  algebras from $\algclass$ are exactly the closure under nesting of languages recognised by  deterministic top-down tree automata. (Boolean combinations are superfluous, since nesting can simulate Boolean combinations.) By Theorem 2.5.9  in~\cite{doktorat-bojanczyk}, this class of languages is exactly the languages definable in chain logic. From the Myhill-Nerode theorem for trees, it follows that a language is definable in chain logic if and only if its syntactic algebra divides some algebra from $\algclass$. Summing up, deciding definability of a tree language in chain logic reduces to deciding if a finite algebra divides some algebra from $\algclass$.  The class $\algclass$ and  its divisors were studied in~\cite{vanderfwerf}, but unfortunately there is still no known algorithm for deciding if an algebra divides some algebra from $\algclass$.
  \end{exa}

 \begin{exa}[\bf Direction sensitive CTL] This example is a small variation on the results from~\cite{DBLP:journals/fuin/EsikI08,DBLP:journals/fuin/EsikI08a,ESIK2006136}.
Consider the class 
 \begin{align*}
  \algclass \eqdef \wreaclos  \set{(2,\land)}
\end{align*}
i.e.~algebras which are wreath products of the semi-lattice.  We will show that tree languages recognised by algebras from $\algclass$ are exactly those which can be  defined in a certain variant of the temporal logic {\sc ctl} described below. Suppose that $\Sigma$ is a ranked alphabet, and let 
\begin{align*}
  X \subseteq \set{(a,i) : a \in \Sigma \mbox{ has rank $n \ge 1$ and $i \in \set{1,\ldots,n}$}} \qquad Y \subseteq \Sigma.
\end{align*}
Define $X \ \mathsf{until}\  Y$ to be the set of  trees $t \in \trees \Sigma$ which contain at least one node $v$ satisfying: (a) the label of $v$ is $Y$; (b) if $w$ is a proper ancestor of $v$ with label $b$, and $v$ is in the $i$-th subtree of $w$, then $(b,i) \in X$. Let use the name \emph{direction sensitive until language} for a language of the form $X \ \mathsf{until}\ Y$, and let \emph{direction sensitive {\sc ctl}} be the nesting closure of direction sensitive until languages. The name is so chosen because direction sensitive {\sc ctl} is essentially the same thing as  {\sc ctl} (without the next modality $\mathsf X$) with the difference that, unlike in standard {\sc ctl}, the until operator  is sensitive to child numbers. It is not difficult to see that  direction sensitive until languages and their complements are exactly the tree languages recognised by the semi-lattice.  Therefore, from  Theorem~\ref{thm:wreath} it follows that  a tree language is recognised by an algebra in $\algclass$ if and only if it can be defined in direction sensitive {\sc ctl}.  Furthermore, deciding if a tree language is definable in direction sensitive {\sc ctl} reduces to testing if its syntactic algebra divides some algebra in $\algclass$. As was the case in Example~\ref{ex:chain-mw}, the class $\algclass$ and divisors were studied in~\cite{vanderfwerf}, but without giving an algorithm  for the above mentioned decision problem.
 \end{exa}

\bibliographystyle{plain}
\bibliography{bib}

\end{document}